\newtheorem{lemma}{Lemma}
\newcommand\BibTeX{{\rmfamily B\kern-.05em \textsc{i\kern-.025em b}\kern-.08em
T\kern-.1667em\lower.7ex\hbox{E}\kern-.125emX}}
\begin{document}



\title{Memory Efficient Quasi-Cyclic Spatially Coupled  LDPC Codes}

\author{Vikram Arkalgud Chandrasetty, Sarah J. Johnson and Gottfried Lechner}



\maketitle
\begin{abstract}
In this paper we propose the construction of Spatially Coupled Low-Density Parity-Check (SC-LDPC) codes using a periodic time-variant Quasi-Cyclic (QC) algorithm. The QC based approach is optimized to obtain memory efficiency in storing the parity-check matrix in the decoders. A hardware model of the parity-check storage units has been designed for Xilinx FPGA to compare the logic and memory requirements for various approaches. It is shown that the proposed QC SC-LDPC code (with optimization) can be stored with reasonable logic resources and without the need of block memory in the FPGA. In addition, a significant improvement in the processing speed is also achieved.
\end{abstract}



\section{Introduction}
Spatially coupled Low-Density Parity-Check (SC-LDPC) codes, which can be thought of as a class of terminated LDPC convolutional codes \cite{Jimenez99}, have recently drawn significant interest in channel coding due to their excellent sum-product decoding thresholds \cite{Tanner2004,Lentmaier_ISIT05,Lentmaier_IT10,Kudekar_IT11}. Their excellent threshold performance is achieved at large code lengths (generally over 100K) which represents a significant challenge for implementation. The practical implementation of such large LDPC codes is a well known problem, particularly the storage of the parity-check matrix in the hardware \cite{Chandrasetty2012a} which can be achieved efficiently only for very structured parity-check matrices. The structure of the matrix also significantly affects the implementation complexity of the encoder and decoder. Quasi-Cyclic (QC) based LDPC matrices \cite{Tanner2001,Johnson_IrregLDPC,Fossorier2004} have proven advantages over unstructured (random) matrices in design complexity and encoding process \cite{Mahdi2011,Xinmiao2011}. They also enable collision-free parallel processing in the decoder \cite{Yongmei2008}.

Convolutional LDPC codes can be of two types: time-variant and time-invariant \cite{Jimenez99}. Recently, there have been studies on deriving time-variant and time-invariant LDPC convolutional codes by unwrapping QC LDPC block codes \cite{Pusane2011}. It has also been noted that the time-invariant codes are less complex for implementation but the decoding performance is poor compared to time-variant codes \cite{Chiu2013}.

In this paper, we present a special case of periodic time-variant SC-LDPC codes using a QC construction technique. The inherent advantages of QC based codes and the diagonal structure of the SC-LDPC matrix are exploited to reduce the complexity of the decoder by reusing the circulants in the matrix. When using sum-product decoding, a critical factor in the decoder performance is the girth of the code in the Tanner graph. Consequently, in this paper we will study the impact of the period of time-varying SC-LDPC codes on the girth of their Tanner graphs.
A comprehensive analysis is carried out to evaluate the performance in terms of bit error rate (BER) and hardware implementation complexity of these memory-optimized QC SC-LDPC codes. The FPGA resource requirements and speed of operation are compared by implementing a hardware model of SC-LDPC codes using QC \cite{Hagiwara2011} and progressive edge growth (PEG) \cite{Zhengang05} techniques. Memory efficiency and speed improvements achievable by using the proposed optimized QC SC-LDPC codes are also presented.


\begin{figure}
    \begin{center}
  \subfigure[A regular $(3,6)$ LDPC protograph with $n_b=2$, $n_c=1$.]
  {
   \label{fig:sc_ldpc_protograph}
   \includegraphics[width=0.19\textwidth]{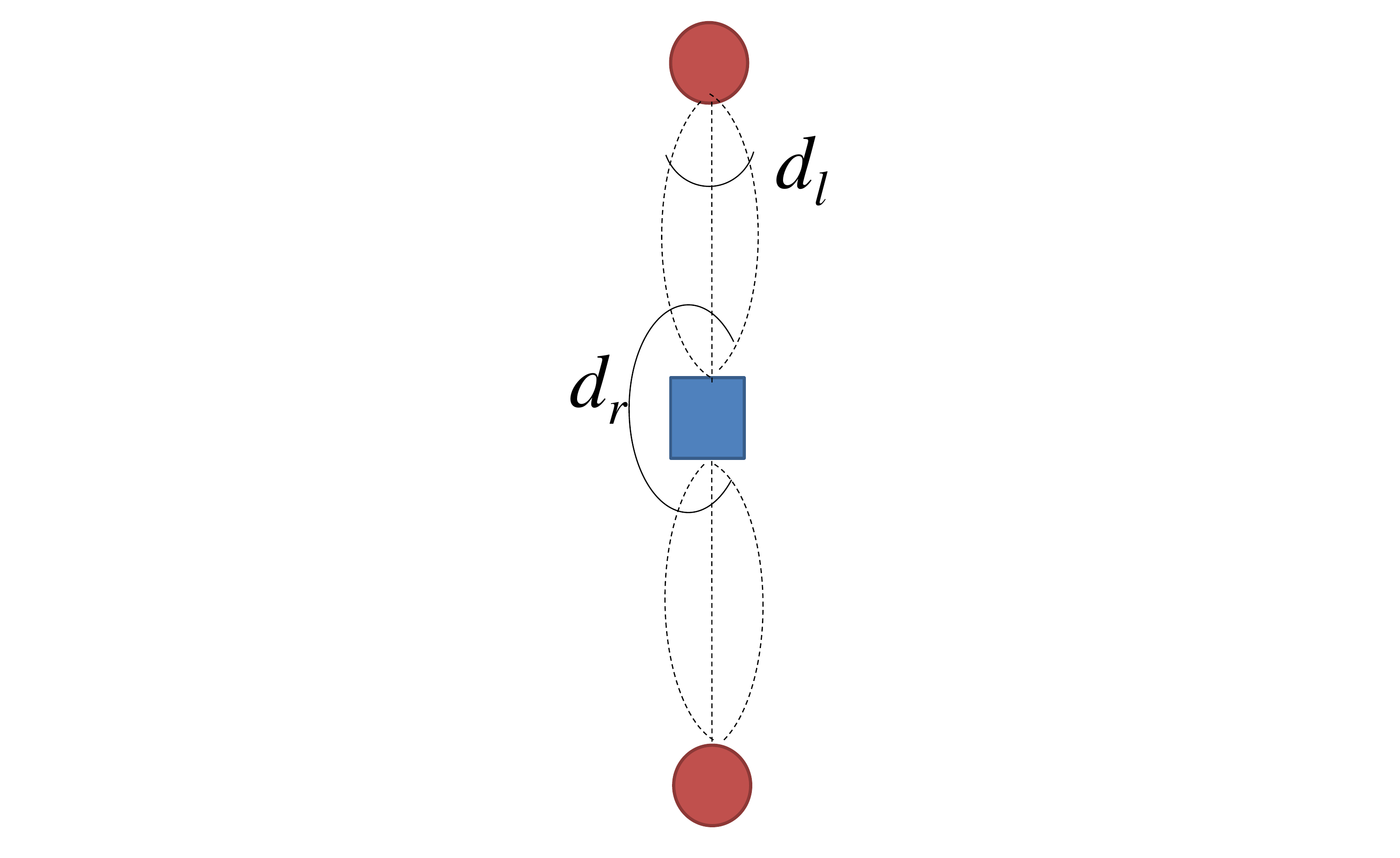}
  }
   \hspace{2em}
  \subfigure[A coupled chain of protographs for a $(3,6,L)$ SC-LDPC code.]{%
   \label{fig:sc_ldpc_ensemble}
   \includegraphics[width=0.2\textwidth]{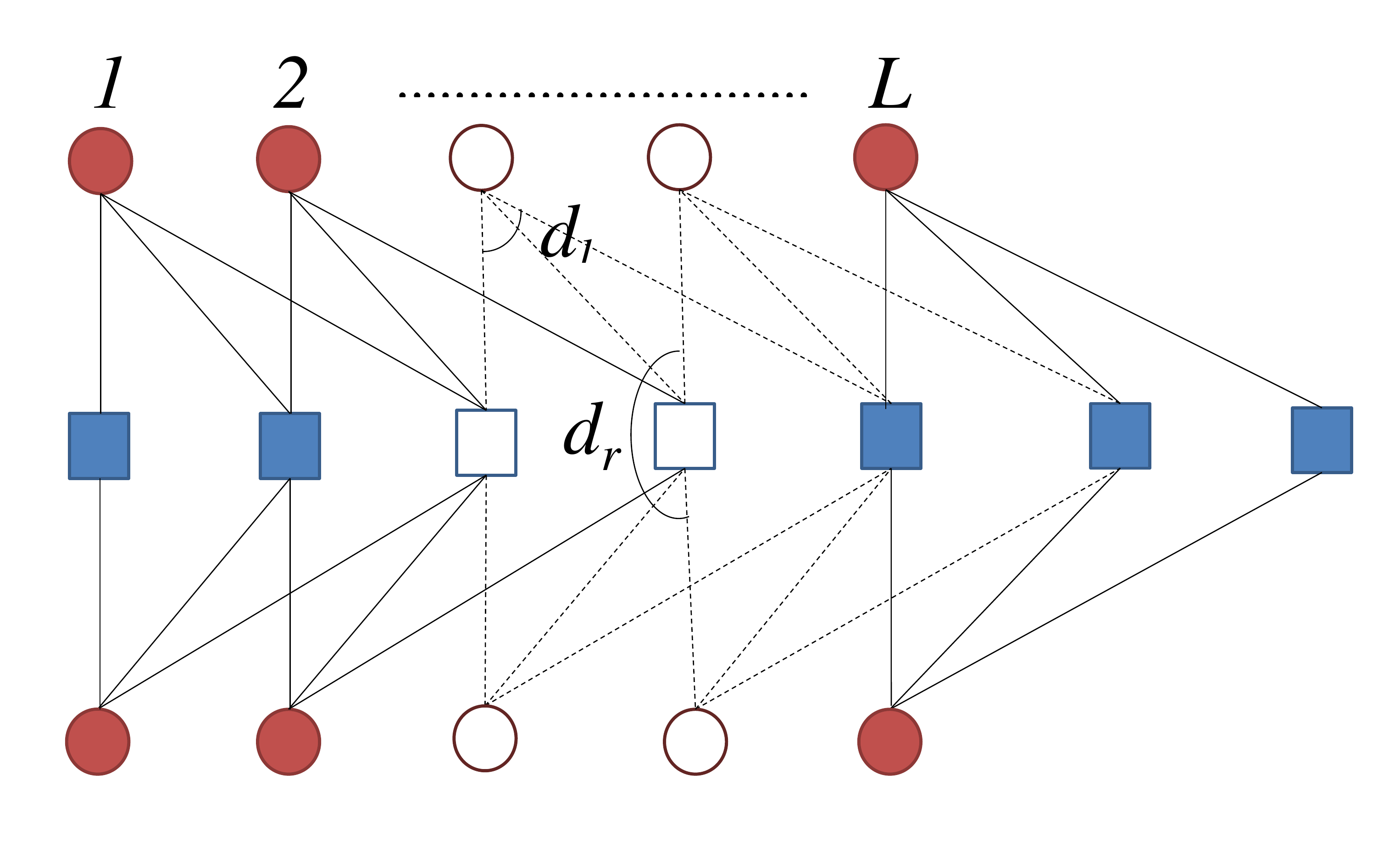}
   } \\
  \subfigure[A $(3,6,L)$ SC-LDPC Tanner graph for $M=3$.]{%
   \label{fig:sc_ldpc_graph}
   \includegraphics[width=0.5\textwidth]{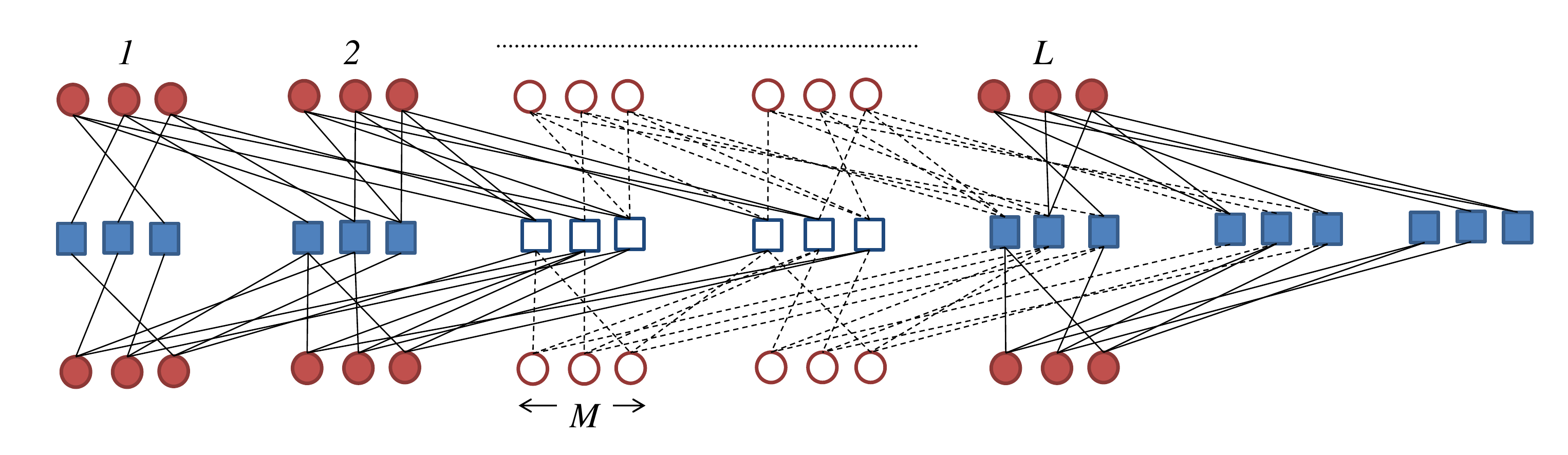}
   } \\
   \includegraphics[width=0.35\textwidth]{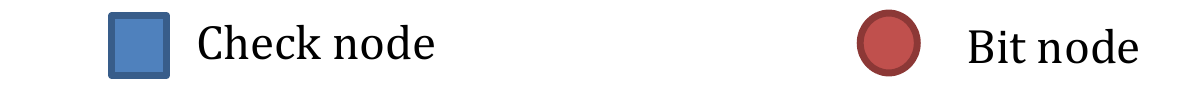}
  \caption{%
   Example of an SC-LDPC graph structure.
   }%
   \label{fig:sc_ldpc_structure}
   \end{center}
\end{figure}

Here we briefly introduce SC-LDPC codes, more details can be found in \cite{Sridharan_IT2007,Lentmaier_ITA09,Kudekar_IT11,Kudekar_ISIT11}. A sample Tanner graph structure of SC-LDPC codes (as defined in \cite{Kudekar_IT11}) is shown in Fig.~\ref{fig:sc_ldpc_structure}.  The SC-LDPC code starts with the protograph of a standard $(d_l,d_r)$-regular LDPC code, where $d_l$ and $d_r$ denotes the average bit degree and check degree respectively. Fig.~\ref{fig:sc_ldpc_protograph} shows a regular LDPC protograph for $d_l=3$ and $d_r=6$. There are $n_b = d_r/\mathrm{gcd}(d_r,d_l) = 2$ bit nodes, shown as circles and $n_c= d_l/\mathrm{gcd}(d_r,d_l) = 1$ check nodes, shown as a square, in the base protograph, where gcd stands for greatest common denominator. (When $\mathrm{gcd}(d_r,d_l)=1$ a spatially coupled code cannot be constructed using this method.)

For an SC-LDPC ensemble, a coupled chain of $L$ of these protographs (see Fig. ~\ref{fig:sc_ldpc_ensemble}) is formed by repeating the standard protograph $L$ times and connecting it once to each of the $d_l-1$ protographs to its right. There are $d_l-1$ extra check nodes added when forming the coupled chain of protographs and this reduces the rate of the resulting spatially coupled chain when compared to the original LDPC protograph code. The rate loss of course diminishes as $L$ is increased.

A particular code of length $N$ is formed from the ($d_l,d_r,L$) protograph chain by creating $M$ copies of every node and edge in the coupled chain \footnote{We use $M$ here as defined in \cite{Sridharan_IT2007,Lentmaier_ITA09} and others. Note that in the terminology of \cite{Kudekar_IT11}, $M$ is alternatively defined as the number of copies of each node divided by $n_b$.}. If a particular bit node was coupled to a particular check node in the original chain each of its copies is then connected to one of the $M$ copies of that check node in the final code. The choice of which copy each node is connected to will depend on the constructions of the code. Like any standard LDPC code, SC-LDPC codes can be constructed randomly or incorporating structure of some form. In the following section we present a QC algorithm for the construction of SC-LDPC codes. In a QC SC-LDPC code, all of the $M$ copies of a particular edge in the protograph are specified using a circulant. The length $N$ and rate $r$ of the SC-LDPC code in Fig.~\ref{fig:sc_ldpc_graph} are computed as follows:
\begin{equation} \label{eq:code_length}
N = n_bML
\end{equation}
\begin{equation} \label{eq:code_rate}
r = \frac{n_b L-n_c(L+d_l-1)}{n_b L}
\end{equation}
where, $n_b$ and $n_c$ are the number of bit and check nodes in the original protograph respectively.

\section{Memory-Efficient Quasi-Cyclic SC-LDPC codes}
The quasi-cyclic (QC) construction of LDPC codes has various advantages including simple encoding \cite{Mahdi2011}, parallel decoding and memory efficiency in storing the matrix elements in the decoder \cite{Yongmei2008}. Hence, for the same advantages, it is desirable to have SC-LDPC codes also in QC form. A dual-core programmable QC LDPC convolutional decoder has been designed in \cite{Tavares08}. However, the design uses a time-invariant code whose performance is poor compared to time-variant convolutional LDPC codes \cite{Pusane2011}. The construction of non-periodic time-varying QC SC-LDPC code is presented in \cite{Hagiwara2011,Mitchell2010}. \cite{Hagiwara2011} discusses the performance improvements achieved by spatial coupling; particularly for quantum LDPC codes. Whereas \cite{Mitchell2010} presents techniques to improve the upper bound on the minimum Hamming distance of members of the QC sub-ensembles. Recently, a time-varying periodic covolutional LDPC decoder has been designed from a QC LDPC block code that achieves a throughput of 2 $Gb/s$ \cite{Chiu2013}.

In contrast to the above variations of convolutional LDPC codes, this paper proposes an innovative technique to construct time-varying QC SC-LDPC code with periodicity, $T \geq~ 1$, and also without the need of a QC LDPC block code. It is shown that by using periodic codes ($T\ll~ L$), the BER performance is as good as that of non-periodic time-varying codes. As an added advantage to the reduced complexity, we also show that there is a significant reduction in the hardware requirements of the decoder, especially the Block RAMs (BRAM) in an FPGA.

While designing an LDPC decoder, it is essential to store the structure of the LDPC matrix in the hardware to carry out the decoding process. The structure is normally stored in the form of BRAMs because of the enormous data required, particularly for large LDPC codes \cite{Xiaoheng2010}. Compared to the codes that are constructed using PEG algorithms \cite{Zhengang05}, the QC based technique requires significantly less memory to store the matrix structure. This is because the latter requires memory for storing only the circulant, instead of the entire matrix. However, a special circulant-processing block is needed for cyclic-shifting of the identity matrix for a given circulant to realize the actual and complete LDPC matrix in the decoder hardware \cite{Chandrasetty2011b}. Therefore, decoders using QC based LDPC matrices require less memory but use additional logic elements for special operations.

The QC SC-LDPC matrix consists of a chain of circulants making up the $L$ protographs. For a $(d_l,d_r = 2d_l)$-regular code, (i.e., a rate $\frac{1}{2}$ code) each protograph in the chain corresponds to a set of $2d_l$ individual circulants.

The staircase structure of the SC-LDPC codes offers a unique opportunity to optimize the QC based technique to further reduce the hardware resources required for the decoder. We investigate whether a fixed set of circulants in the matrix (i.e., fewer than $L$ independent circulant sets) can be reused without affecting the decoding performance. To this end, we consider SC-LDPC codes constructed with one, two and three repeating sets of circulants. In other words, we can say time-varying QC SC-LDPC codes with periodicity, $T$=1, 2 and 3. Fig.~\ref{fig:qc_sc_1}, Fig.~\ref{fig:qc_sc_2} and Fig.~\ref{fig:qc_sc_3} show one, two and three repeating sets respectively for a rate $\frac{1}{2}$ $(3,6,L)$ QC SC-LDPC code. Each unique circulant is represented by a unique letter of the alphabet, and similarly repeated circulants are represented by the same letter. Note that in the given example, the QC SC-LDPC codes with $T$=1, 2 and 3 require a limited number of 6, 12 and 18 circulants only. Thus the QC SC-LDPC code is completely specified by a very small number of shift values (6, 12 or 18), significantly reducing the memory requirements in storing the SC-LDPC matrix in the decoder.

\begin{figure}
 \begin{center}
  \subfigure[Reuse-1 (time-invariant, $T=1$)]{%
   \label{fig:qc_sc_1}
   \tiny{ $ H = \left[ \begin{array}{ccccccc} A  & D &    &    &  & &\\
                                        B  & E & A  & D  &  & &\\
                                        C  & F & B  & E  &A &D &\\
                                           &   & C  & F  &B &E &  \\
                                           &   &    &    &C &F &\\
                                           &   &    &    &  &  & \ddots \\
\end{array}
\right]
$}
  }
  \hfill
  \subfigure[Reuse-2 (time-variant, $T=2$)]{%
   \label{fig:qc_sc_2}
     \tiny{ $ H = \left[ \begin{array}{cccccccccc} A  & D &    &   &   &   &   &   &   \\
                                                   B  & E & H  & L &   &   &   &   &\\
                                                   C  & F & J  & M & A & D &   &   &\\
                                                      &   & K  & N & B & E & H & L &\\
                                                      &   &    &   & C & F & J & M & \\
                                                      &   &    &   &   &   & K & N & \\
                                                      &   &    &   &   &   &   &   & \ddots\\
\end{array}
\right] $ }
   }%
     \end{center}
      \begin{center}
   \subfigure[Reuse-3 (time-variant, $T=3$)]{%
   \label{fig:qc_sc_3}
  \tiny{ $ H = \left[ \begin{array}{cccccccccc}    A  & D &    &   &   &   &   &  &    \\
                                                  B  & E & H  & L &   &   &   &  &  \\
                                                  C  & F & J  & M & Q & T &   &  & \\
                                                     &   & K  & N & R & U & A & D & \\
                                                     &   &    &   & S & V & B & E    &   \\
                                                     &   &    &      &   &   & C & F  & \\
                                                     & & & & & & & &  \ddots\\
\end{array}
\right] $ }
   }%
 \end{center}
  \caption{%
   SC-LDPC parity-check sub-matrices from a $(3,6)$-regular protograph with quasi-cyclic column reuse. Each unique letter represents a unique circulant. Empty elements represent the all zero matrix.
   }%
   \label{fig:opt_qc_sc_ldpc}
 \end{figure}

\subsection{The effect of circulant reuse on code girth}

Tanner \cite{Tanner2001} and Fossorier \cite{Fossorier2004} have shown that QC LDPC codes defined by an array of circulant matrices have a maximum possible girth of 12. Protograph LDPC codes, such as SC-LDPC codes, allow zero matrices in place of some circulant matrices and so are not limited to this bound. However, for these more general LDPC codes Kim et al. in \cite{Kim_IT2007} have shown that the girth of the QC code can be bounded by the girth of its base protograph. Kim et al. defined \emph{inevitable cycles} in a QC-LDPC code as those cycles that always exist in the code regardless of the choice of circulant permutations. They then found all the subgraph patterns of protographs which lead to inevitable cycles of size up to girth 20.

It is easy to see that the submatrix
\[
P_{12} =  \left[ \begin{array}{cc}
     1 & 1 \\
     1 & 1 \\
     1 & 1 \\
   \end{array} \right]
\]
exists in the base matrix of any SC-LDPC code with $d_l>=4$ and any code with $d_l=3$ that has $n_b >= 2$. Thus, applying the results of \cite{Kim_IT2007}, these SC-LDPC codes have inevitable cycles of length 12 and hence an upper bound of 12 on their girth.

Kim et al. also showed that a protograph with girth $g \geq 4$ cannot contain inevitable cycles of length smaller than $3g$. Thus if the circulants are large enough and chosen appropriately, a QC SC-LDPC code with girth $12$ can be found. However, choosing the shift value of the circulants in QC-LDPC codes to avoid all non-inevitable cycles, and thus obtain a girth equal to the minimum inevitable cycle length, is not trivial and few algebraic constructions for QC girth 12 codes have been found. One notable exception is an algebraic construction for a class of (3,5)-regular QC-LDPC codes of Tanner \cite{Tanner2001} which obtain girth 12 for $p \times p$ circulant matrices for certain choices of prime $p$ \cite{Kim_IT2006}.

For a QC SC-LDPC code with the reuse of $T$ columns of circulants as in Fig.~\ref{fig:opt_qc_sc_ldpc}, the choice of circulants is restricted by the reuse factor and so we define reuse-$T$ inevitable cycles as those cycles that always exist in the code when circulant permutations are reused, regardless of the choice of those circulant permutations.

Adapting the notation of \cite{Fossorier2004}, we define the parity-check matrix $H$ for a general SC-LDPC code as shown in Fig.~\ref{fig:fossorier_sc_ldpc},
\begin{figure*}[ht!]
\begin{center}
\[ \label{LDPC_CC} H_{cc} = \left[
\begin{array}{ccccccccccc}
I_{(p_{1,1})}   & \hdots & I_{(p_{1,n_b})}   &                       &        &                      &                       &       &&        \\
I_{(p_{2,1})}   & \hdots & I_{(p_{2,n_b})}   & I_{(p_{2,n_b+1})}     & \hdots & I_{(p_{2,2n_b})}     &                       &       &&        \\
\vdots          & \hdots & \vdots            & I_{(p_{3,n_b+1})}     & \hdots & I_{(p_{3,2n_b})}     & I_{(p_{3,2n_b+1})}    &\hdots &I_{(p_{3,3n_b})}&        \\
I_{(p_{d_l,1})} & \hdots & I_{(p_{d_l,n_b})} & \vdots                & \hdots & \vdots               & I_{(p_{4,2n_b+1})}    &\hdots &I_{(p_{4,3n_b})}&        \\
                &        &                   & I_{(p_{d_l+1,n_b+1})} & \hdots & I_{(p_{d_l+1,2n_b})} & \vdots                &\hdots &\vdots& \ddots \\
                &        &                   &                       &        &                      & I_{(p_{d_l+2,2n_b+1})}&\hdots &I_{(p_{d_l+2,3n_b})}&        \\
\end{array}
\right]
\]
\end{center}
 \caption{General representation of SC-LDPC code.}
 \label{fig:fossorier_sc_ldpc}
\end{figure*}
where $I_{(p_{x,y})}$ is a circulant matrix with a shift value $p_{x,y}$. I.e. $I_{(p_{x,y})}$ represents the circulant permutation matrix with a one at column $(r + p_{x,y}) \text{ mod } M$ for row $r$, $0 \leq r \leq M - 1$, and zero elsewhere. For notational clarity, we assume that the SC-LDPC code has $n_c = 1$, however the girth results hold for all $n_c$.

A cycle of length $2i$ in $H$ is described by a sequence of $2i$ positions $H_{x,y}$ such that: 1) each consecutive position is obtained by changing, alternatively, the row or column index of the previous position; and 2) all positions are distinct, except the first and last ones. Thus two consecutive positions in any cycle belong to distinct circulant permutation matrices which are either
in the same row, or in the same column. A length $2i$ cycle exists if and only if \cite{Fossorier2004}:
\begin{equation} \label{Fossorier_def} \sum_{k=0}^{i-1} \Delta_{x_k,x_{k+1}} (l_k) = 0 \mathrm{ mod } M
\end{equation}
where $x_0 = x_i$, $x_k \neq x_{k+1}$,  $l_k \neq l_{k+1}$ and
\[ \Delta_{x_k,x_j} (l) =  p_{x_k,l} - p_{x_j,l}.
\]

Given this notation we can now define the inevitable cycles in the QC SC-LDPC codes with circulant reuse.
\begin{lemma} \label{lemma:1}
A QC SC-LDPC code with reuse $T = 1$, and ${d_l \geq 3}$ has a reuse-1 inevitable cycle of length $6$ and thus a maximum girth of $6$.
\end{lemma}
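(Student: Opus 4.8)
The plan is to exhibit an explicit short cycle in the parity-check matrix that survives for every choice of circulant shifts, exploiting the fact that reuse $T=1$ forces the same circulant to appear in horizontally shifted copies down the staircase. Looking at the reuse-1 structure in Fig.~\ref{fig:qc_sc_1}, the key observation is that because the whole protograph column is repeated with period one, the circulant labelled (say) $A$ appears at position $(x_0,l_0)$ and again at a shifted position $(x_1,l_1)$ where the shift equals the period of the staircase. Concretely, the same circulant occupies two distinct rows that also share a second common circulant, so the two rows agree in two column positions. I would first identify these two rows and the two shared columns explicitly in terms of the $p_{x,y}$ notation from Fig.~\ref{fig:fossorier_sc_ldpc}, noting that reuse forces $p_{x_0,l_0}=p_{x_1,l_1}$ and $p_{x_0,l_1}=p_{x_1,l_0}$ type equalities between the entries of the repeated circulant set.

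The second step is to translate this overlap pattern into Fossorier's cycle condition \eqref{Fossorier_def}. A length-$6$ cycle corresponds to $i=3$, i.e. alternately changing row and column three times and returning to the start, and it exists iff $\sum_{k=0}^{2}\Delta_{x_k,x_{k+1}}(l_k)=0 \bmod M$. Here each $\Delta_{x_k,x_{k+1}}(l_k)=p_{x_k,l_k}-p_{x_{k+1},l_k}$ is a difference of shift values in the same column but different rows. The point of an inevitable cycle is that, once we route the cycle through positions that share the reused circulant, each such difference either cancels against another in the sum or equals zero outright by the reuse constraint, so the whole sum collapses to $0 \bmod M$ identically, independent of the numerical shift values chosen. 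So the core of the argument is: pick the six positions $(x_k,l_k)$ in Fig.~\ref{fig:qc_sc_1} forming a closed alternating walk that passes through repeated circulants, write out the three $\Delta$ terms, and verify the telescoping cancellation.

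I would then combine this with the earlier-cited results: the statement in the excerpt that a protograph containing the submatrix $P_{12}$ has an inevitable cycle of length $12$ gives the girth upper bound of $12$ for general (non-reused) SC-LDPC codes, but here reuse is \emph{stronger}, collapsing three protograph columns into one repeated block and thereby creating a tighter $P_{12}$-like overlap that yields an inevitable $6$-cycle. Having produced a reuse-$1$ inevitable cycle of length $6$, the maximum-girth claim follows immediately: an inevitable cycle is present for all shift choices, so no circulant assignment can raise the girth above $6$, and since girth is even and a $6$-cycle exists, the maximum attainable girth is exactly $6$.

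The main obstacle I anticipate is the bookkeeping of selecting the correct closed alternating walk through the reused circulants so that the three $\Delta$ terms genuinely cancel rather than merely reducing to some generic shift-dependent quantity. The subtlety is that reuse $T=1$ repeats an entire column of the protograph shifted by one staircase step, so I must verify that the two instances of the same circulant actually lie in rows and columns that can be chained into a closed walk of length exactly $6$ (and not, say, forced up to length $12$). Getting the index arithmetic right — confirming that $x_0=x_3$, that consecutive $x_k$ and $l_k$ differ as required, and that the reuse equalities cause $p_{x_k,l_k}-p_{x_{k+1},l_k}$ to telescope to zero — is where the real care is needed; the rest is a direct appeal to the cited cycle condition.
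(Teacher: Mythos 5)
Your plan is essentially the paper's proof: the paper likewise invokes Fossorier's condition \eqref{Fossorier_def} with $i=3$ and exhibits a closed alternating walk through the positions $(2,1),(2,n_b+1),(4,n_b+1),(4,2n_b+1),(3,2n_b+1),(3,1)$, where the reuse-1 identities $p_{2,1}=p_{4,2n_b+1}$, $p_{2,n_b+1}=p_{3,2n_b+1}$, $p_{4,n_b+1}=p_{3,1}$ make the three $\Delta$ terms telescope to zero identically, exactly the cancellation mechanism you describe. The only step you defer --- writing down the six positions explicitly --- is the entire content of the paper's proof, and carrying out your stated bookkeeping yields precisely that cycle.
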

\begin{proof}
Following \eqref{Fossorier_def}, a cycle is described by the positions \begin{eqnarray*}  H_{x,y} &=& [p_{2,1},p_{2,n_b+1},p_{4,n_b+1},p_{4,2n_b+1}, \\ & & p_{3,2n_b+1},p_{3,1}] \end{eqnarray*} since, by the reuse-1 construction, $p_{2,1}=p_{4,2n_b+1}$, $p_{2,n_b+1}=p_{3,2n_b+1}$, and $p_{4,n_b+1}=p_{3,1}$.
\end{proof}

\begin{lemma} \label{lemma:2}
A QC SC-LDPC code with reuse $T = 2$, and ${d_l \geq 4}$ has a reuse-2 inevitable cycle of length $8$ and thus a maximum girth of $8$.
\end{lemma}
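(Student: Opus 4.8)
The plan is to mirror the proof of Lemma~\ref{lemma:1}: I will exhibit an explicit set of eight circulant positions forming a closed cycle in $H_{cc}$ and show, using only the equalities forced by the reuse-$2$ construction, that the shift sum in \eqref{Fossorier_def} vanishes identically (hence for every choice of shifts), so the cycle is inevitable. First I would record what reuse-$2$ fixes: a circulant is determined by its bit-node index, its row offset $j$ within the band, and the parity of its protograph column; in particular the circulant in protograph $c$ equals the one in protograph $c+2$ at the same offset and bit node. Any candidate length-$8$ cycle must therefore be built so that its four vertical (column) differences pair off against reused partners of opposite sign.

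Next I would search among the columns belonging to a single bit node of protographs $1$ through $4$ for eight positions whose shift differences cancel in pairs. A cycle that works is
\[ H_{x,y}=\left[p_{4,1},\,p_{2,1},\,p_{2,n_b+1},\,p_{4,n_b+1},\,p_{4,2n_b+1},\,p_{6,2n_b+1},\,p_{6,3n_b+1},\,p_{4,3n_b+1}\right], \]
traversed in the listed order, where each consecutive pair shares a row or a column, alternately. I would then verify the telescoping: the reuse-$2$ identities $p_{2,1}=p_{4,2n_b+1}$, $p_{4,1}=p_{6,2n_b+1}$, $p_{2,n_b+1}=p_{4,3n_b+1}$, and $p_{4,n_b+1}=p_{6,3n_b+1}$ make the alternating sum in \eqref{Fossorier_def} collapse to $0 \bmod M$ regardless of the shift values, which is exactly the inevitability claim.

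The hypothesis $d_l\ge 4$ emerges when checking that all eight positions lie in the support of $H_{cc}$: each protograph column carries nonzero circulants only in the $d_l$ consecutive block-rows $c,\dots,c+d_l-1$, and the extreme entries above (for instance row $4$ of protograph $1$ and row $6$ of protograph $3$) exist precisely when $d_l\ge 4$. I would close by noting that an inevitable cycle of length $8$ caps the girth at $8$, which is the asserted maximum.

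I expect the construction step to be the main obstacle: choosing the rows and columns so that every vertical shift difference is matched by an equal, reused circulant of the opposite sign, while keeping all eight positions inside the narrow diagonal band. The parity constraint of reuse-$2$ (partners sit two protographs apart) is what simultaneously fixes the offsets and, through the band width, pins the bound to $d_l\ge 4$; loosening it to partners four protographs apart would widen the span and push the requirement to larger $d_l$. A minor point I would double-check is that the closed walk does not collapse to a shorter cycle (block-row $4$ is visited more than once); but since any such collapse would only decrease the girth, the bound girth $\le 8$ holds in either case.
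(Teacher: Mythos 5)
Your proof is correct and is essentially identical to the paper's: you exhibit the same eight positions $p_{2,1},p_{2,n_b+1},p_{4,n_b+1},p_{4,2n_b+1},p_{6,2n_b+1},p_{6,3n_b+1},p_{4,3n_b+1},p_{4,1}$ (merely rotated to start at $p_{4,1}$) and invoke the same four reuse-2 identities to make the sum in \eqref{Fossorier_def} vanish. Your additional remarks on why $d_l\geq 4$ is needed and on the revisited block-row are fine and go slightly beyond what the paper writes, but the argument is the same.
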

\begin{proof}
Following \eqref{Fossorier_def} a cycle is described by the positions \begin{eqnarray*}  H_{x,y} &=& [p_{2,1},p_{2,n_b+1},p_{4,n_b+1},p_{4,2n_b+1}, \\ & & p_{6,2n_b+1},p_{6,3n_b+1},p_{4,3n_b+1},p_{4,1}] \end{eqnarray*} since, by the reuse-2 construction, $p_{2,1}=p_{4,2n_b+1}$, $p_{2,n_b+1}=p_{4,3n_b+1}$, $p_{4,n_b+1}=p_{6,3n_b+1}$ and  $p_{6,2n_b+1}=p_{4,1}$.
\end{proof}

\begin{lemma} \label{lemma:3}
A QC SC-LDPC code with reuse $T = 3$, and ${d_l \geq 4}$ has a reuse-3 inevitable cycle of length $10$ and thus a maximum girth of $10$.
\end{lemma}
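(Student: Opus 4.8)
The plan is to mirror the proofs of Lemma~\ref{lemma:1} and Lemma~\ref{lemma:2}: exhibit an explicit closed sequence of ten entries $H_{x,y}$ (five column-shared pairs joined by five row-shared pairs), and then show via the Fossorier condition \eqref{Fossorier_def} that the associated shift sum $\sum_{k=0}^{4}\Delta_{x_k,x_{k+1}}(l_k)$ is identically $0\bmod M$ once the reuse-$3$ identities are imposed. The governing identity is that reuse-$3$ forces $p_{x,y}=p_{x+3,\,y+3n_b}$ for every admissible $(x,y)$ (shifting a circulant three protograph-columns to the right moves it three block-rows down), together with its iterate $p_{x,y}=p_{x+6,\,y+6n_b}$. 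First I would write out the five $\Delta$-terms of the candidate cycle, then substitute these identities to cancel the ten shift values, leaving a telescoping sum that vanishes regardless of the chosen permutations; this establishes a reuse-$3$ inevitable cycle of length $10$ and hence, as before, an upper bound of $10$ on the girth.

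The key structural point I would exploit is that, because the cycle has an \emph{odd} number $i=5$ of column transitions, its shift terms cannot decompose into independent cancelling pairs the way Lemma~\ref{lemma:2}'s terms did (there the four $\Delta$-terms split cleanly into two reuse-$2$ pairs). Instead the cancellation must run around a single closed loop, exactly as in the length-$6$ cycle of Lemma~\ref{lemma:1}, whose three terms telescope as $(A-B)+(C-A)+(B-C)=0$. Concretely I expect the ten positions to organise as a ``triangle'' of three column-segments sharing one reuse class (the analogue of Lemma~\ref{lemma:1}, now linked by shifts of $3$ and $6$ rather than the $1$ and $2$ used there) fused with one cross-matched pair of segments (the analogue of Lemma~\ref{lemma:2}); this $3+2$ split is forced, since five segments cannot be partitioned into reuse-pairs alone. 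I would build the loop by continuing the staircase of Lemma~\ref{lemma:2} and then folding it back through a reuse-$3$ identity so that the walk returns to its starting block.

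The main obstacle is constructing ten positions that are simultaneously valid (nonzero) entries for $d_l\ge4$ and telescoping. A naïve continuation of the monotone staircase of Lemmas~\ref{lemma:1}--\ref{lemma:2} would climb to row $8$ in column-block $4$, which leaves the staircase band and would require $d_l\ge5$; thus the $d_l\ge4$ hypothesis is tight and rules out the obvious extension. The delicate step is therefore to find a \emph{non-monotone} closed walk---one that uses a reuse-$3$ (or doubled reuse-$3$) identity to fold back---whose ten corners all lie within the band $j\le x\le j+d_l-1$, are pairwise distinct, and still leave every shift value matched against its reuse image. Once such positions are displayed, verifying $\sum_{k}\Delta_{x_k,x_{k+1}}(l_k)\equiv0\bmod M$ is a routine substitution exactly analogous to the two preceding proofs, and the conclusion that the maximum girth is $10$ follows immediately.
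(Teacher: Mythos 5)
Your proposal correctly identifies the proof template used in the paper --- exhibit ten explicit block positions forming a closed walk, impose the reuse-$3$ identity $p_{x,y}=p_{x+3,\,y+3n_b}$, and check that the signed shift sum in \eqref{Fossorier_def} vanishes --- and you correctly diagnose that the monotone staircase of Lemmas~\ref{lemma:1} and~\ref{lemma:2} cannot simply be extended, since a reuse-$3$ identification forces block rows $x$ and $x+3$ (or $x+6$) to coexist inside a column band of width $d_l$. However, the proposal stops exactly where the proof begins: the entire mathematical content of the lemma \emph{is} the explicit ten-position walk, and you never produce it. Moreover, the fallback structure you conjecture is ``forced'' --- a telescoping triangle of three segments in one reuse class (linked by row shifts of $3$ and $6$) fused with one reuse pair --- cannot be realized under the hypothesis $d_l\geq 4$: the triangle would place rows $x$ and $x+6$ in the same column band, which requires $d_l\geq 7$. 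So the route you sketch as the only alternative to pairing is itself ruled out by the same band constraint you identified, and your argument that ``five segments cannot be partitioned into reuse-pairs alone'' overlooks the third possibility that actually saves the construction.

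The paper's cycle resolves this by \emph{revisiting} one circulant block, so that its shift value enters the sum once with each sign and cancels against itself, leaving the other eight positions to pair up under four reuse-$3$ identities. Concretely, the walk is
\begin{eqnarray*}
H_{x,y} &=& [p_{2,1},\,p_{2,n_b+1},\,p_{5,n_b+1},\,p_{5,3n_b+1},\,p_{6,3n_b+1},\\
& & \;p_{6,4n_b+1},\,p_{5,4n_b+1},\,p_{5,n_b+1},\,p_{3,n_b+1},\,p_{3,1}],
\end{eqnarray*}
where every listed block lies within the staircase band for $d_l\geq 4$, and the identities $p_{2,1}=p_{5,3n_b+1}$, $p_{2,n_b+1}=p_{5,4n_b+1}$, $p_{3,1}=p_{6,3n_b+1}$, $p_{3,n_b+1}=p_{6,4n_b+1}$ make the sum in \eqref{Fossorier_def} collapse to zero for any choice of the circulants (the block $(5,n_b+1)$ appears twice, traversed through different rows, contributing $-p_{5,n_b+1}+p_{5,n_b+1}$). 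Until you exhibit such a walk --- and in particular discover the repeated-block device, since your own analysis shows the pure pairing and pure telescoping decompositions are both unavailable --- the proposal is a plan rather than a proof.
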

\begin{proof}
Following \eqref{Fossorier_def} a cycle is described by the positions \begin{eqnarray*} H_{x,y} &=& [p_{2,1},p_{2,n_b+1},p_{5,n_b+1},p_{5,3n_b+1},p_{6,3n_b+1},  \\ & & p_{6,4n_b+1}, p_{5,4n_b+1},p_{5,n_b+1},p_{3,n_b+1},p_{3,1}] \end{eqnarray*} since, by the reuse-3 construction, $p_{2,1}=p_{5,3n_b+1}$, $p_{2,n_b+1}=p_{5,4n_b+1}$, $p_{3,1}=p_{6,3n_b+1}$ and $p_{3,n_b+1}=p_{6,3n_b+1}$.
\end{proof}

Although the proof only requires that we find one 6-cycle (respectively 8-cycle or 10-cycle) which exists for any choices of circulants, in fact every column of $H$ (and hence every codeword bit) is involved in 6-cycles (respectively 8-cycles or 10-cycles) in QC SC-LDPC codes with reuse-1 (respectively reuse-2 or 3) regardless of which circulants are chosen.

Consequently, the reuse-1 codes not only have a poor girth but also have a very large number of cycles of the minimum length. Similarly for the reuse-2 codes, while a girth of 8 is not necessarily problematic for LDPC codes if there are only a few such cycles, the very large number of 8-cycles in the QC SC-LDPC is certainly detrimental for the performance of the sum-product decoder when longer codes are considered.



\section{Performance of SC-LDPC codes}

The QC SC-LDPC codes were simulated to evaluate the BER performance on a binary input additive white Gaussian noise (BI-AWGN) channel. We used multi-edge density evolution to compute the threshold of SC-LDPC codes, shown in Fig.~\ref{fig:ldpc_awgn_threshold}, and noted that $L \geq 33$ is necessary to achieve thresholds better than that of $(3,d_r)$ standard LDPC codes and that codes with $d_l=4$ have an improved threshold over  codes with $d_l=3$ as $L$ is increased. Also, as SC-LDPC codes are known to have good performance for very long codes, we compared decoding performance with code length 25K, 100K and 250K as shown in Fig.~\ref{fig:ber_sc_ldpc_var_mat}. Given these results, a fairly long SC-LDPC code of 103,200-bits ($\approx 100K$) with $d_l=4$, $L = 129$, $M= 400$ and $r=0.488$ is considered for our following simulation results. However, we note that even longer codes would perform noticeably better. Simulations were carried out using software models on a BI-AWGN channel. The sum-product algorithm was used for decoding with a maximum of 1000 iterations, and the simulation was run until at least 50 word errors were accumulated.

\begin{figure}
 \begin{center}
  \includegraphics[width=0.45\textwidth]{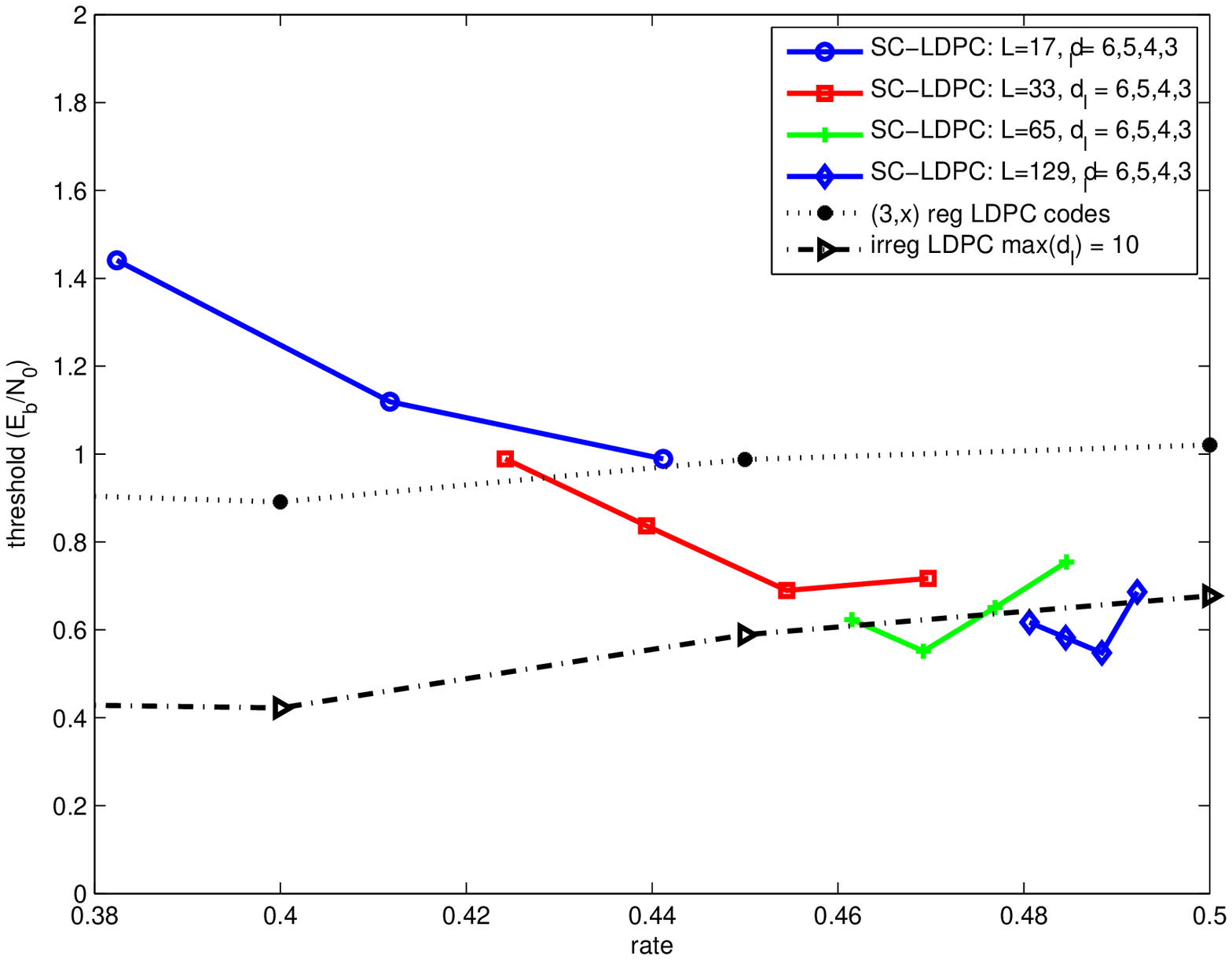}
 \end{center}
 \caption{Thresholds for LDPC codes from density evolution.}
 \label{fig:ldpc_awgn_threshold}
\end{figure}

\begin{figure}
 \begin{center}
   \includegraphics[width=0.5\textwidth]{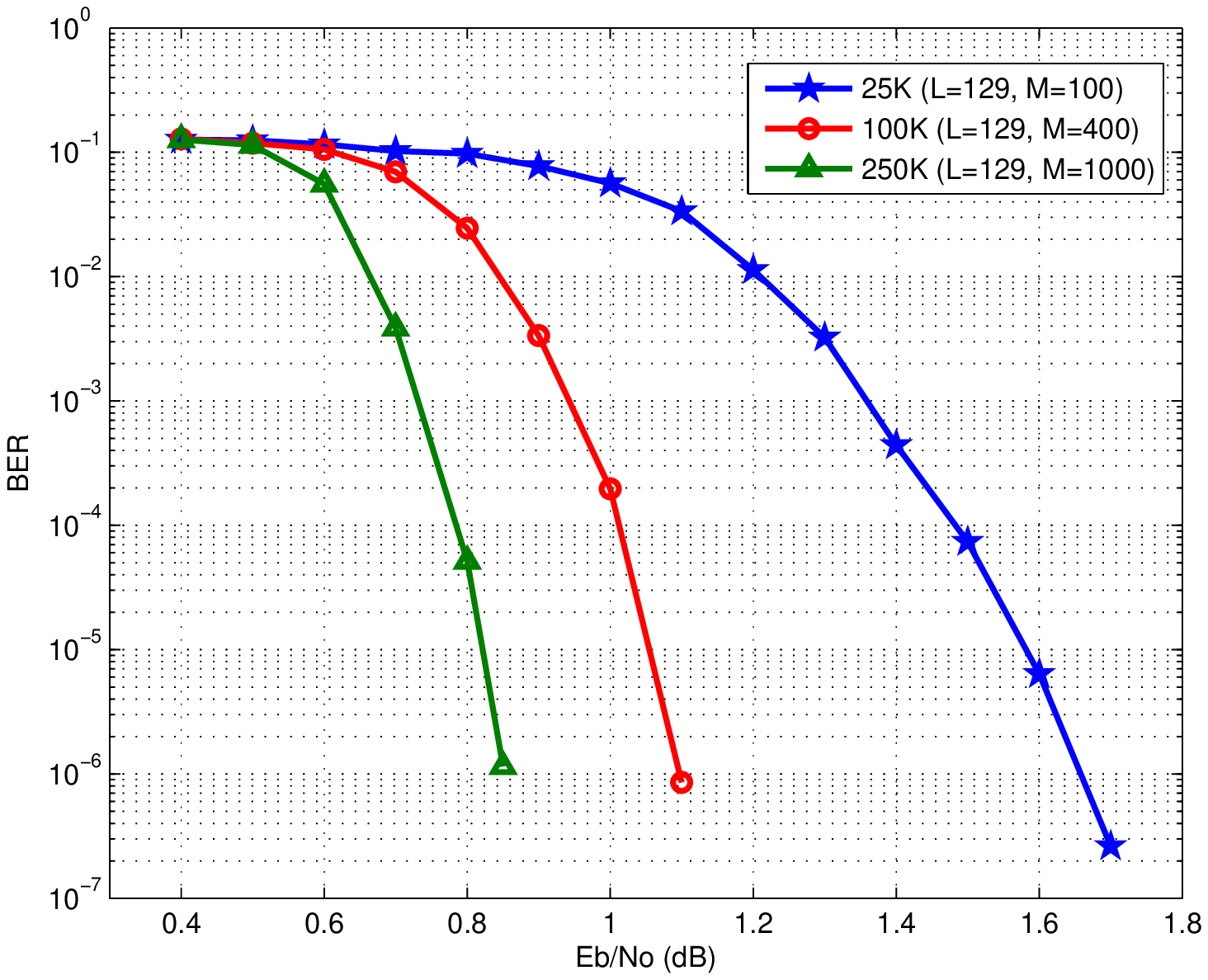}
  \end{center}
  \caption{BER Performance of SC-LDPC codes for various code lengths. Code rate is $0.488$.}
  \label{fig:ber_sc_ldpc_var_mat}
 \end{figure}

 \begin{figure}
 \vspace{-3mm}
 \begin{center}
  \includegraphics[width=0.5\textwidth]{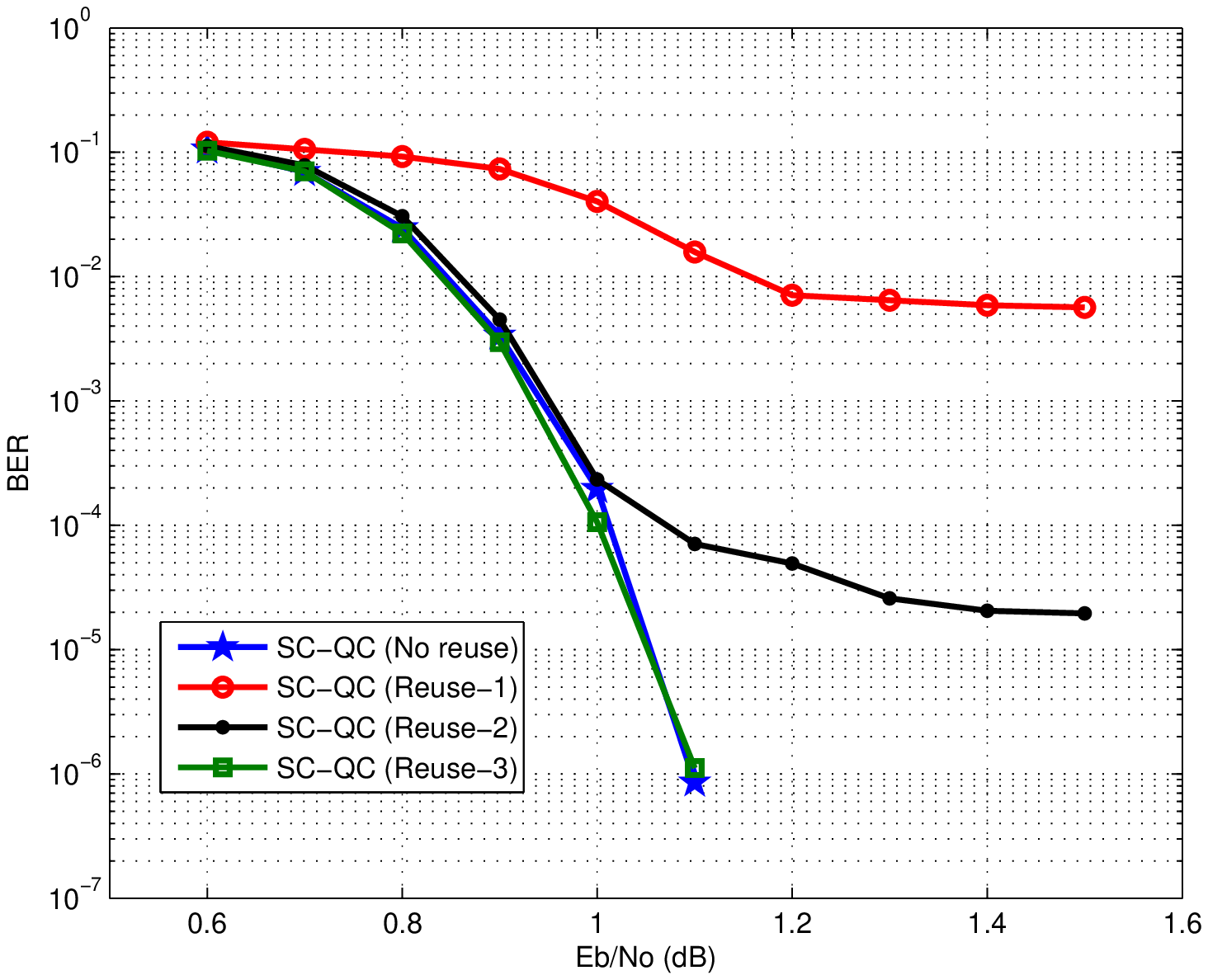}
 \end{center}
 \caption{BER performance of quasi-cyclic regular $(4,8,129)$ SC-LDPC codes of length 100K, rate $0.488$, $M=400$.}
 \label{fig:ber_sc_qc}
 \end{figure}

\begin{figure}
 \begin{center}
   \includegraphics[width=0.5\textwidth]{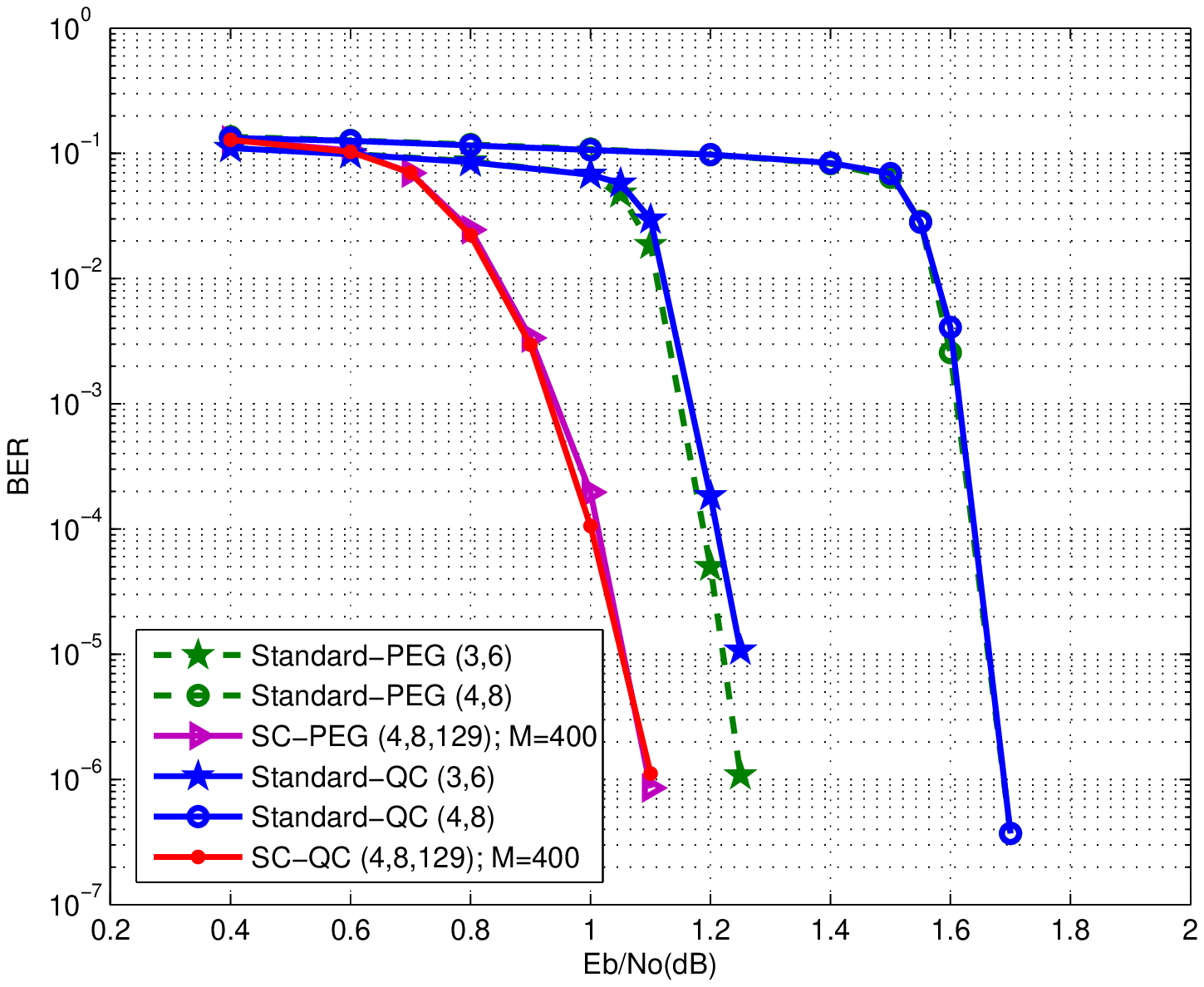}
  \end{center}
  \caption{BER performance of standard and SC-LDPC codes of length 100K. Code rates are $0.5$ and $0.488$ for standard-LDPC (PEG and QC) and SC-LDPC codes respectively.}
  \label{fig:ber_ldpc_matrices}
 \end{figure}

Fig.~\ref{fig:ber_sc_qc} shows a comparison of QC SC-LDPC codes with different levels of circulant reuse. In the construction process for all codes the circulants are chosen to avoid girth 4 in the resulting matrix. As would be expected due to girth limitations, the quasi-cyclic matrix with reuse of one circulant column (i.e., time invariant) and two circulant columns (periodic time-varying with period 2) have poor performances. However, by reusing three circulant columns (periodic time-varying with period 3), the BER performance we obtained is as good as standard (non-periodic time-varying) QC SC-LDPC codes down to a bit error rate of $10^{-6}$.

Lastly, in Fig.~\ref{fig:ber_ldpc_matrices} we compare the BER performance of the QC SC-LDPC codes with non-quasi-cyclic SC-LDPC codes and with standard LDPC block codes. The non-quasi-cyclic SC-LDPC codes are constructed using a PEG algorithm modified for spatial coupling. The standard LDPC codes are constructed according to PEG \cite{Xiao-Yu2001} and QC \cite{Fossorier2004} techniques. From Fig.~\ref{fig:ber_ldpc_matrices}, it is clear that the waterfall regions of SC-LDPC codes are better than standard LDPC codes (as predicted by density evolution, see Fig.~\ref{fig:ldpc_awgn_threshold}). Also, the performance of the QC based SC-LDPC codes is very similar to that of PEG based codes.

\section{Estimation of hardware requirements for using SC-LDPC codes}

\begin{figure}
 \begin{center}
   \includegraphics[width=0.5\textwidth]{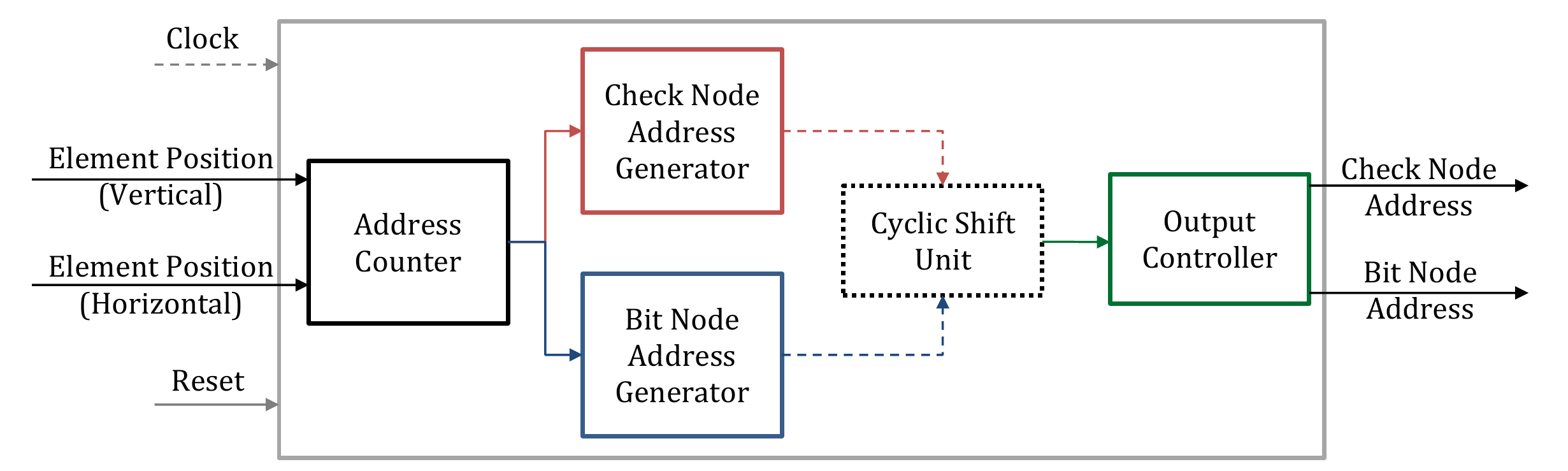}
   \end{center}
   \caption{Block diagram of the hardware model for storing and processing QC SC-LDPC codes.}
   \label{fig:sc_ldpc_fpga}
\end{figure}

The new QC SC-LDPC codes are compared with random SC-LDPC codes by designing a decoder hardware model. The model consists of units that are essential for storing the contents of the LDPC matrix and generating the appropriate address locations in a decoder. The block diagram of the hardware model is shown in Fig.~\ref{fig:sc_ldpc_fpga}. The hardware model consists of a clock and a  synchronous reset as inputs. It also consist of an address counter and output controller for sequencing the input and output data respectively. The bit and check node address generators are responsible for storing the LDPC matrix information and generating appropriate addresses for the decoder. Note that the QC SC-LDPC codes require an additional \emph{Cyclic Shift} unit to generate appropriate addresses based on the circulants for decoding, as shown in Fig.~\ref{fig:sc_ldpc_fpga}. Whereas PEG based SC-LDPC codes do not require any such unit, since the complete set of matrix elements are stored in hardware memory.

The hardware models for both PEG and QC SC-LDPC codes have been designed and synthesized using Verilog HDL for FPGA implementation. The designs are placed and routed for Xilinx Kintex-7 (XC7K355T) FPGA  with LDPC codes of length 100K and 25K. The estimates of FPGA hardware requirements and maximum clock frequency achievable for the designs are shown in Table~\ref{tab:fpga_requirements}. As expected, a large number of BRAMs are utilized by the PEG based codes compared to QC SC-LDPC codes, to store the matrix elements. With slightly increased logic units - registers and look-up tables (LUT), the standard QC codes offer a significant saving (up to 43 times in the case of 100K code lengths) of BRAMs compared to PEG SC-LDPC codes. As stated earlier, the increased logic requirements are due to the \emph{Cyclic Shift} unit in the QC based SC-LDPC hardware models. Further, it is also noted that QC SC-LDPC codes with reuse-3 do not require BRAMs. This is due to few set of reusable circulants in the code, that can be easily stored in the LUTs. Elimination of BRAMs (which normally have large access delays) also results in a significant improvement in the speed of operation (up-to 40\% increase in the maximum operating clock frequency) compared to PEG or standard QC SC-LDPC codes (for 100K codes).

\begin{table}
\vspace{3em}
\caption{FPGA resources required for various SC-LDPC codes}
\label{tab:fpga_requirements}
\centering
\resizebox{0.45\textwidth}{!}{
\begin{tabular}{l||cc|cc|c}
\hline
\hline
LDPC codes & \multicolumn{2}{c}{Std-PEG} &  \multicolumn{2}{c}{Std-QC} & QC (Reuse-3)\\[0.4ex]
\hline
Code length & 100K & 25K & 100K & 25K & 100K / 25K \\[0.5ex]
\hline
Registers   & 286  & 268 & 320  & 302 & 322 \\
LUTs        & 185  & 170 & 890  & 819 & 609 \\
Slices      & 143  & 139 & 357  & 356 & 214 \\
RAM         & 528  & 116 & 12   & 12  &  -  \\
Clock (MHz) & 188  & 215 & 180  & 182 & 264 \\
\hline
\end{tabular}}
\end{table}

\section{Conclusion}
This paper has presented the construction of periodic time-varying SC-LDPC codes using the quasi-cyclic technique. It also demonstrates the threshold advantages achievable by SC-LDPC codes over standard LDPC codes. Memory optimized QC SC-LDPC codes are introduced to significantly reduce the complexity of the decoder for storing the matrix elements. It is shown that by reusing the circulant columns in the SC-LDPC matrix (with reuse-3), it is possible to obtain a memory efficient decoder without noticeably affecting the decoding performance. The advantages of using the optimized QC SC-LDPC codes have been demonstrated by designing a hardware model, which shows substantial reduction in memory requirements and a significant improvement in the operating clock frequency of the decoder.

\small

\end{document}